\documentclass{article}

\usepackage[margin=1in]{geometry}
\usepackage{amsmath,amssymb,amsthm,enumerate,url,siunitx,multirow,hhline,graphicx,cancel,soul}
\usepackage{unicode-math}

\newtheorem{theorem}{Theorem}

\newtheorem{lemma}[theorem]{Lemma}
\newtheorem{corollary}[theorem]{Corollary}

\newtheorem{remark}[theorem]{Remark}

\newcommand{\paren}[1]{\left(#1\right)}

\renewcommand{\brace}[1]{\left\{#1\right\}}
\renewcommand{\ang}[1]{\left\langle#1\right\rangle}

\newcommand{\ceil}[1]{\left\lceil#1\right\rceil}

\newcommand{\F}{{\mathbb{F}}}
\newcommand{\A}{{\mathcal{A}}}
\newcommand{\B}{{\mathcal{B}}}
\newcommand{\diag}[1]{\mathrm{diag}\paren{#1}}
\renewcommand{\ker}[1]{\mathrm{ker}\paren{#1}}

\newcommand{\M}[1]{\begin{bmatrix}#1\end{bmatrix}}

\renewcommand{\span}[1]{\mathrm{span}\paren{#1}}
\renewcommand{\dim}[1]{\mathrm{dim}\paren{#1}}

\newcommand{\elt}{\mathrm{elt}}

\let\oldtextbf=\textbf
\renewcommand\textbf[1]{{\boldmath\oldtextbf{#1}}}

\newcommand{\COMMENT}[1]{}

\title{New lower bounds on tensor rank of $(n,2,m)$ matrix multiplication with GPT-6}
\author{Jason Yang}
\date{September 2026}

\begin{document}

\maketitle

\begin{abstract}
The tensor rank of $n\times 2$-with-$2\times m$ matrix multiplication is at least $(n+2)m$ if $n\ge 4$ and at least $\left\lceil \frac{24m+2}{5} \right\rceil$ if $n=3$, over arbitrary fields.
This result matches the Hopcroft-Kerr upper bound for $n=3$ and $m\le 6$.
\end{abstract}

\section{Introduction}
An $R$-rank decomposition of the $\ang{k,l,m}$ matrix multiplication tensor is a list of linear maps $\alpha_r:\F^{k\times l}\to \F,\ \beta:\F^{l\times m}\to \F$ and matrices $C_r\in\F^{k\times m}$ for $0\le r<R$
such that for all $A\in\F^{k\times l}$ and $B\in\F^{l\times m}$,
\[AB=\sum_r \alpha_r(A)\beta_r(B)C_r.\]


The rank of $\ang{k,l,m}$, denoted as $R(\ang{k,l,m})$, is the smallest possible $R$ such that a decomposition exists for $\ang{k,l,m}$.
Computing this quantity for various $\ang{k,l,m}$ is the central task fast matrix multiplication; see \cite{survey} for more information.
Permuting $k$, $l$, and $m$ does not change the rank of $\ang{k,l,m}$; however, changing the ground field $\F$ might.

Our focus is on lower bounds for $R(\ang{n,2,m})$ that apply to all fields.
Recently, Shitov \cite{shitov} used GPT-5.6 to find a simple proof that $R(\ang{2,2,m})\ge \frac{7}{2}m$.
This quantity matches the best-known upper bound $R(\ang{n,2,m})\le \ceil{\frac{3nm+\max(n,m)}{2}}$ from Hopcroft-Kerr \cite{hopcroft}.

Building off of Shitov, we use GPT-6 to prove the following:

\begin{theorem}
\label{n2m}
For $n\ge 4$, $R(\ang{n,2,m})\ge (n+2)m$.
\end{theorem}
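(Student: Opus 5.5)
We prove Theorem~\ref{n2m} by induction on $m$ using the substitution method, following Shitov's approach for $\ang{2,2,m}$. Think of $B\in\F^{2\times m}$ as $m$ columns $b_1,\dots,b_m\in\F^2$. Then $AB$ consists of the $m$ products $Ab_j$, each an $\ang{n,2,1}$ product of rank $2n$. The target bound $(n+2)m$ lies between the naive $2m$ and the trivial $nm$ output-counting bounds. So we must show that each column of $B$ forces $n+2$ fresh multiplications on average. The base case is $m=1$: here $R(\ang{n,2,1})=2n\ge n+2$ exactly when $n\ge 2$. We state the induction as $R(\ang{n,2,m})\ge R(\ang{n,2,m-1})+(n+2)$, or, if this per-step bound fails, as a two-step amortized version $R(\ang{n,2,m})\ge R(\ang{n,2,m-2})+2(n+2)$.

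\textbf{Key steps.} First, take a decomposition $\sum_r \alpha_r(A)\beta_r(B)C_r$ and look at how the $\beta_r$ restrict to the last column $b_m$. Choose a column, after an invertible change of basis on columns, such that many $\beta_r$ depend on it nontrivially. Next, we substitute: impose linear conditions that make $b_m$ a linear combination of the other columns, or set it to $0$. Each $\beta_r$ that involves $b_m$ "essentially" can then be killed or absorbed. We also project the output onto the first $m-1$ columns, which yields a decomposition of $\ang{n,2,m-1}$.

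To show at least $n+2$ terms disappear, we look at the terms restricted to $B=b_m e_m^T$, meaning only the last column is nonzero. These terms compute $A b_m$ into the last output column, so their $C_r$ span an $n$-dimensional output space in column $m$. This gives $2n$ terms with a nonzero $\beta_r$ restricted to $b_m$. The question is how many of those can be made to vanish simultaneously under a substitution $b_m\mapsto \sum_{j<m} \lambda_j b_j$. That substitution has $2(m-1)$ free parameters, and we need to eliminate at least $n+2$ terms.

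\textbf{Main obstacle.} The hard step is guaranteeing $n+2$ rather than $2$ eliminated terms. A naive substitution kills only as many terms as there are $\beta_r$ that are proportional to a fixed functional of $b_m$, and the two coordinates of $b_m$ give only $2$. We will therefore also use the $A$-side: substitute on $A$ as well, zeroing a row direction, and combine this with a rank argument on the $C_r$. Concretely, after killing $2$ terms through $b_m$, we argue that the remaining terms touching column $m$ still need $n$ linearly independent output matrices $C_r$ with nonzero last column. Projecting those out, via a quotient of the output space, costs another $n$ terms that cannot be reused for the first $m-1$ columns. The condition $n\ge4$ should enter exactly here, to rule out degenerate configurations in which the $C_r$ can be shared across columns (the Hopcroft–Kerr savings). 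In the $n=3$ case this sharing is what allows the weaker $24m/5$ bound. I would first try to prove the per-step inequality via this output-quotient argument. If some configuration defeats it, I would fall back to pairing columns and doing a case analysis on the rank of the $2\times 2$ block of $\beta$-restrictions.
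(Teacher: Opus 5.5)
\textbf{There is a genuine gap: the inductive step is the whole theorem, and it is not proved.} Your $b_m$-substitution has codimension $2$ and removes only $2$ terms, as you say. The extra $n$ you attribute to the output side does not materialize. Projecting the output onto columns $1,\dots,m-1$ deletes a term only if its $C_r$ is supported in column $m$. A term whose $C_r$ has a nonzero last column generally also feeds earlier columns, so it survives as a term of the $\ang{n,2,m-1}$ decomposition. Having ``$n$ independent $C_r$ with nonzero last column'' therefore costs nothing. Such sharing is not a degenerate configuration; it is exactly how Hopcroft--Kerr-type decompositions save, for every $n$.

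The honest output-side substitution does give $+n$: in $\mathrm{tr}(ABC)=\sum_j c_jAb_j$, set $c_m=\Lambda(c_{<m})$ to kill $n$ terms. But it cannot simply be stacked on $b_m=\Lambda'(b_{<m})$. The result is $\ang{n,2,m-1}$ plus a cross term $\Lambda(c_{<m})\,A\,\Lambda'(b_{<m})$, which can lower the rank (for instance by cancelling an entire column).

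Small cases show the count $2+n$ is wrong, not merely unproved, because nothing in it uses $n\ge4$:
\begin{itemize}
\item For $n=2$ it would give $R(\ang{2,2,2})\ge R(\ang{2,2,1})+4=8$, contradicting Strassen's $7$.
\item For $n=3$ it would give $R(\ang{3,2,3})\ge R(\ang{3,2,2})+5=16>15$.
\item The two-step fallback fails the same way: $R(\ang{2,2,3})=11<4+8$.
\end{itemize}
For $n\ge4$, your per-step inequality started from $R(\ang{n,2,1})=2n$ would even give $(n+2)m+n-2$. That is strictly more than the theorem asserts; for example it gives $26$ for $\ang{4,2,4}$, the Hopcroft--Kerr value. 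So your unspecified appeal to ``$n\ge4$ rules out degenerate configurations'' is carrying all of the weight.

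\textbf{What the paper does instead.} There is no induction on $m$. The paper makes one global dimension count in $\ker{W}/J$, which has dimension $r-(n+1)m$. It uses an explicit linear map $\Phi:(U^3)^2\oplus\overline{L}^2\to(\ell(U^3)/J)^2\oplus(\ker{W}/J)^6$ built from $f_1,f_2,f_3$ and $g$. The correction vectors $t_i,s_i$ cancel in $D(e_{0,1})\ell(u)+\sum_i D(e_{i,0})(t_i-g(u_i))$ because the diagonal matrices $D(\cdot)$ commute. Applying $W$ and comparing rows $1,2,3$ against row $0$ then forces $u=v=0$. After that, $t_1+s_2=\ell(x)$ and $t_3+s_1=\ell(y)$ force $x,y\in L\cap\overline{L}=\{0\}$. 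Since $\dim{\overline{L}}=\dim{\ell(U^3)/J}$, injectivity gives $6m\le 6(r-(n+1)m)$, i.e.\ $r\ge(n+2)m$. This is precisely where $n\ge4$ enters: three rows besides row $0$ are needed to build $\ell$ and the six correction terms.
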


\begin{theorem}
\label{32m}
$R(\ang{3,2,m})\ge \ceil{\frac{24m+2}{5}}$.
\end{theorem}

If both $n$ and $m$ are $\ge 4$, applying Theorem \ref{n2m} on both permutations of $n$ and $m$ proves $R(\ang{n,2,m})\ge \max((n+2)m,\ (m+2)n)=nm+2\max(n,m)$,
which is a strict improvement over a special case of a result from Bl\"aser \cite{blaser99}:
\[\forall 2\le k\le m:\ R(\ang{k,l,m})\ge kl+lm+k-l+m-3\]
\[\Rightarrow R(\ang{2,n,m})\ge
nm+n+m-1\]

Theorem \ref{32m} matches Hopcroft-Kerr for $m\le 6$, proving $R(\ang{3,2,3})=15,\ R(\ang{3,2,4})=20,\ R(\ang{3,2,5})=25$, and $R(\ang{3,2,6})=30$. Previously, Burichenko \cite{proof323} gave a much longer proof that $R(\ang{3,2,3})=15$.

An early version of this paper proved the weaker result $R(\ang{3,2,m})\ge \ceil{\frac{24m}{5}}$, which shortly after was improved to $R(\ang{3,2,m})> \ceil{\frac{24m}{5}}$ by Tsyganov et. al. \cite{tsyganov}. By integrality of rank, this is equivalent to $R(\ang{3,2,m})\ge\frac{24m+1}{5}$.

\section{Previous work}
For completeness, we restate the proof from \cite{shitov} that $R(\ang{2,2,m})\ge \frac{7}{2}m$, with slight simplification.

The decomposition equation $AB=\sum_r \alpha_r(A)\beta_r(B)C_r$ be rewritten as
\[AB=WD(A)\B(B) \ \forall A,B;\]
\[D(A):=\M{\alpha_0(A)\\&\ddots\\&&\alpha_{R-1}(A)},\ 
\B(B):=\M{\beta_0(B)\\\vdots\\\beta_{R-1}(B)},\ 
W(t\in \F^R):=\sum_r t_rC_r.\]

We immediately have that $\B$ is injective, since $\B(B)=0$ implies $AB=0$ for all $A$, forcing $B=0$.
Additionally, $W$ is surjective because $AB$ covers all possible matrices of its shape as $A$ and $B$ are varied.
By the rank-nullity theorem, $\dim{\ker{W}}=r-2m$.

Without loss of generality, assume
\begin{itemize}
    \item $\F$ is infinite (otherwise we can replace $\F$ with its algebraic closure);
    \item no $\alpha_r$ is constant 0 (otherwise we can remove a rank of the decomposition).
\end{itemize}

Then a generic rank-1 matrix $M$ satisfies $\alpha_r(M)\ne 0$ for all $r$.
We can apply a change of basis $(\alpha'_r(A),\beta'_r(B),C'_r):=(\alpha_r(S^{-1}AT),\ \beta_r(T^{-1}B),\ S^{-1}C_r)$
such that $\alpha_r(e_{0,0})\ne 0$ for all $r$,
where $e_{i,j}$ denotes the 1-hot matrix with a 1 at $(i,j)$ and whose shape is inferred from context. Thus, $D(e_{0,0})$ is invertible.

Define the following:
\begin{itemize}
    \item $U=\M{0&\cdots&0\\ *&\cdots&*}$ is the space of $2\times m$ matrices where the top row is 0
    \item $f_i(x)=D(e_{i,0})\B(x)$
    \item $g(x)=D(e_{0,1})\B(x)-D(e_{0,0})\B(e_{0,1}x)$
    \item $J=f_0(U)$
\end{itemize}

It is clear that $f_i(U),g(U)\subseteq \ker{W}$, $f_0$ is injective, and $\dim{J}=\dim{U}=m$.

Now consider the map
\[\Phi:U\to (\ker{W}/J)^2;\
\Phi(u):=(f_1(u),\ g(u))+_\elt J,\]

where $+_\elt$ represents element-wise addition, i.e. $\Phi(u):=(f_1(u)+J,\ g(u)+J)$.

\begin{lemma}
$\Phi$ is injective.
\end{lemma}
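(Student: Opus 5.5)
The plan is to argue by contradiction. Suppose $u\in U$ with $\Phi(u)=0$, so there are $v,w\in U$ with
\[D(e_{1,0})\B(u)=D(e_{0,0})\B(v),\qquad D(e_{0,1})\B(u)-D(e_{0,0})\B(e_{0,1}u)=D(e_{0,0})\B(w).\]
I will show $u=0$. The only tools are the identity $W D(A)\B(B)=AB$ for all $A,B$, injectivity of $\B$, and invertibility of $D_0:=D(e_{0,0})$. The guiding observation is that $f_1$ and $g$ are the first-order terms of $f_0$ under the two shears that preserve the form of the decomposition:
\begin{itemize}
\item the row shear $S_s=I+s e_{1,0}$ acting on $A$, which gives $D(e_{0,0}+s e_{1,0})=D_0+sD(e_{1,0})$;
\item the column shear $T_t=I+t e_{0,1}$, which sends $e_{0,0}\mapsto e_{0,0}+te_{0,1}$ and $B\mapsto (I-te_{0,1})B$.
\end{itemize}

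\textbf{Step 1.} Rewrite both relations coordinatewise using $D_0^{-1}$. Since all $D(\cdot)$ are diagonal and commute, the first relation says $\B(v)=D_0^{-1}D(e_{1,0})\B(u)$. Likewise the second says $\B(w)+\B(e_{0,1}u)=D_0^{-1}D(e_{0,1})\B(u)$. In words, multiplying $\B(u)$ by the diagonal operators $\Lambda_1:=D_0^{-1}D(e_{1,0})$ and $\Lambda_2:=D_0^{-1}D(e_{0,1})$ lands back in $\mathrm{im}\,\B$. This is the feature to exploit: $\mathrm{im}\,\B$ is a $2m$-dimensional subspace of $\F^R$, and in general it is not closed under these diagonal multiplications.

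\textbf{Step 2.} Test the relations against $W D(A)$ for a rank-one $A$ that mixes both directions. The natural choice is $A=(e_0+se_1)(e_0+te_1)^T=e_{0,0}+se_{1,0}+te_{0,1}+st\,e_{1,1}$, for which $D(A)$ is affine-bilinear in $(s,t)$. Apply $WD(A)$ to each relation (after multiplying through by $D_0$), and commute diagonal factors past each other, e.g. $D(A)D_0\Lambda_1=D(e_{1,0})D(A)$. Each side then becomes an expression of the form $W D(\cdot)\B(\cdot)$ with both arguments known, so it evaluates to an honest matrix product. The first relation is innocuous on its own: $e_{1,0}u=0$ and $e_{0,0}v=0$ because $u,v$ have zero top row. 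The information comes from the cross terms in $s$ and $t$. There, $e_{1,1}u$ is the bottom row of $u$, and the two shear relations applied to $e_{0,0}+se_{1,0}+te_{0,1}$ must be compatible. Comparing coefficients of $st$ (and of $s$, $t$) should give linear matrix equations among $u$, $v$, $w$, and $e_{0,1}u$. I expect these to force the bottom row of $u$ to vanish, hence $u=0$.

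\textbf{Main obstacle.} The hard step is Step 2. A product like $D(A)D(e_{1,0})\B(u)$ involves two $D$-factors, and $W$ of it is not directly a matrix product. The identity only lets me evaluate $W$ on vectors of the form $D(\text{one matrix})\B(\text{one matrix})$. So I must use the two hypotheses to trade one $D$-factor for a $\B$-factor:
\begin{itemize}
\item replace $D(e_{1,0})\B(u)$ by $D_0\B(v)$;
\item replace $D(e_{0,1})\B(u)$ by $D_0\B(w+e_{0,1}u)$;
\end{itemize}
then cancel the common $D_0$ factor. That cancellation is legitimate only after multiplying the identity $WD(A)\B(B)=AB$ through by the appropriate diagonal matrices, and choosing $A$ so this closes up is the delicate part. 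My first attempt will be to polarize the identity in $A$ along the two shear directions and match terms of order $s$, $t$, and $st$. If that fails to close, I will fall back on the change-of-basis freedom $(\alpha_r(S^{-1}AT),\beta_r(T^{-1}B),S^{-1}C_r)$ with $S=S_s$ and $T=T_t$. I would then differentiate the transformed identity at $s=t=0$, which produces exactly the expressions $f_1$ and $g$.
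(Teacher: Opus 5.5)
There is a genuine gap. Your setup is correct: there are $v,w\in U$ with $D(e_{1,0})\B(u)=D_0\B(v)$ and $D(e_{0,1})\B(u)-D_0\B(e_{0,1}u)=D_0\B(w)$. You also correctly identify that the point is to trade a $D$-factor for a $\B$-factor and then cancel $D_0$. But Step 2, which is the entire content of the lemma, is never carried out, and the mechanism you propose for it cannot close. The identity $WD(A)\B(B)=AB$ only evaluates $W$ on vectors with a \emph{single} $D$-factor. Moreover, $D_0$ can be cancelled only as a vector identity in $\F^R$, before $W$ is applied: $WD_0x=WD_0y$ does not give $Wx=Wy$. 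You also cannot ``multiply the identity through by diagonal matrices'', since they sit to the right of $W$ and do not commute with it. If you apply $WD(A)$ to either relation, every term (and every coefficient of $1,s,t,st$) has the form $WD(\cdot)D(\cdot)\B(\cdot)$. Using the hypotheses only swaps one such double-$D$ term for another, e.g.\ $D(A)D(e_{1,0})\B(u)=D(A)D_0\B(v)$. So the coefficient comparison yields no matrix equations, and your conclusion that the bottom row of $u$ vanishes is only an expectation.

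The missing idea is to combine the two relations at the vector level so that the double-$D$ terms cancel. Multiply the first relation by $D(e_{0,1})$, the second by $D(e_{1,0})$, and subtract. Diagonal matrices commute, so $D(e_{0,1})D(e_{1,0})\B(u)$ cancels against $D(e_{1,0})D(e_{0,1})\B(u)$, leaving
\[D_0D(e_{1,0})\B(e_{0,1}u)=D_0\paren{D(e_{0,1})\B(v)-D(e_{1,0})\B(w)}.\]
Now cancel the invertible $D_0$. Every remaining term has a single $D$-factor, so applying $W$ gives $e_{1,1}u=e_{0,1}v-e_{1,0}w=e_{0,1}v$, using that $w$ has zero top row. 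The right side vanishes in row $1$, so the bottom row of $u$ is zero and $u=0$. Your remark that the two shears ``must be compatible'' points toward this commutation, but the proof only exists once this combination is written down. The fallback of differentiating a change of basis is likewise not an argument as stated.
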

\begin{proof}


$\Phi(u)=0$ means $f_1(u), g(u)\in J$, so
\[D(e_{0,1})f_1(u)-D(e_{1,0})g(u)\in D(e_{0,1})J-D(e_{1,0})J.\]
The left hand side simplifies to
\[\cancel{D(e_{0,1})D(e_{1,0})\B(u)}
-\cancel{D(e_{1,0})D(e_{0,1})\B(u)} + D(e_{1,0})D(e_{0,0})\B(e_{0,1}u).\]

We can factor out $D(e_{0,0})$ from both sides because it is invertible, so
\[D(e_{1,0})\B(e_{0,1}u)\in D(e_{0,1})\B(U)-D(e_{1,0})\B(U)\]

Left-multiplying by $W$ yields
\[e_{1,0}e_{0,1}u\in e_{0,1}U-\cancel{e_{1,0}U}\]
\[e_{1,1}u\in e_{0,1}U.\]
Since the left hand side only has nonzeros on row 1, but the right hand side only has nonzeros on row 0, $u=0$.
\end{proof}

Since $\Phi(U)\subseteq (\ker{W}/J)^2$, comparing dimensions yields $m\le 2((r-2m)-m)
\Rightarrow r\ge \frac{7}{2}m$.

\begin{remark}
Since $f_0$ is injective, one can avoid using quotient subspaces by showing that the map $\phi(u,v,v')=(f_1(u)+f_0(v),\ g(u)+f_0(v'))$ is injective. Then $\phi(U^3)\subseteq \ker{W}^2$, yielding the same lower bound.
\end{remark}

\begin{remark}
Using $f_0$ alone would yield
$f_0(U)\subseteq W
\Rightarrow m\le r-2m \Rightarrow r\ge 3m$.
\end{remark}

\section{Proofs for new lower bounds}
For the $\ang{n,2,m}$ tensor, all previous linear maps and subspaces have the same definition, except the matrices inside $D(\cdot)$ have shape $n\times 2$. It is clear that $\dim{\ker{W}}=r-nm$ and $\dim{\ker{W}/J}=r-(n+1)m$.

\subsection{$\ang{n,2,m}$, $n\ge 4$}
Define
\begin{itemize}
    \item $\ell(x_1,x_2,x_3)=\sum_{1\le i\le 3} f_i(x_i)$
    \item $L=\ker{u\mapsto \ell(u)+J,\ u\in U^3}$
    \item $\overline{L}\subseteq U^3$ chosen such that $U^3=L\oplus \overline{L}$, i.e. $L\cap \overline{L}=\brace{0}$ and $U^3=L+\overline{L}$
\end{itemize}

\begin{lemma}
The map $\Phi:(U^3)^2\oplus \overline{L}^2 \to (\ell(U^3)/J)^2\oplus (\ker{W}/J)^6$ defined as
\[\Phi(u,v,x,y)=(
\ell(u),
\ell(v),\ 
-g(u_1)+t_1,
-g(u_2)+t_2,
-g(u_3)+t_3,\ 
-g(v_1)+s_1,
-g(v_2)+s_2,
-g(v_3)+s_3
)
+_\elt J,\]

where
\[t_1=f_2(x_2)-f_3(y_1)\]
\[t_2=-f_1(x_2)-f_3(x_1+y_2)\]
\[t_3=f_1(y_1)+f_2(x_1+y_2)\]

\[s_1=-f_2(x_1)+f_3(y_3)\]
\[s_2=f_1(x_1)+f_3(x_3+y_1)\]
\[s_3=-f_1(y_3)-f_2(x_3+y_1),\]

is injective.
\end{lemma}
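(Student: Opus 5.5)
The plan is to assume $\Phi(u,v,x,y)=0$ and show in two stages that all arguments vanish. First I show $u=v=0$ by a weighted-sum trick that removes the $t_i$ and $s_i$. Then I show $(x,y)\in L$, which together with $x,y\in\overline{L}$ forces $x=y=0$.

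Write $D_i:=D(e_{i,0})$, so that $f_i(w)=D_i\B(w)$. The matrices $D_i$ are diagonal and hence commute, and $D_0$ is invertible. The definitions of $t_i$ are arranged so that
\[D_1t_1+D_2t_2+D_3t_3=0.\]
To check this, expand each term: $D_1D_2\B(x_2)$ cancels $-D_2D_1\B(x_2)$, $-D_1D_3\B(y_1)$ cancels $D_3D_1\B(y_1)$, and $-D_2D_3\B(x_1+y_2)$ cancels $D_3D_2\B(x_1+y_2)$. The same check gives $\sum_i D_is_i=0$.

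Now suppose $\Phi(u,v,x,y)=0$. Then $\ell(u)\in J$ and $-g(u_i)+t_i\in J$ for each $i$. Multiply the $i$-th relation by $D_i$ and sum. The $t_i$ drop out, and we get
\[\sum_iD_ig(u_i)\in \sum_i D_iJ.\]
Expanding $g$ gives
\[\sum_iD_ig(u_i)=D(e_{0,1})\,\ell(u)-D_0\sum_iD_i\B(e_{0,1}u_i).\]
Since $\ell(u)\in J=D_0\B(U)$, the term $D(e_{0,1})\ell(u)$ lies in $D(e_{0,1})D_0\B(U)$. Hence
\[D_0\sum_iD_i\B(e_{0,1}u_i)\in D_0\Big(D(e_{0,1})\B(U)+\sum_iD_i\B(U)\Big).\]
Factor out $D_0$ and apply $W$, using $WD(A)\B(B)=AB$. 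The left side becomes $\sum_i e_{i,1}u_i$. On the right, $D_i\B(U)$ maps to $e_{i,0}U=0$, and $D(e_{0,1})\B(U)$ maps to $e_{0,1}U$, which is supported on row $0$. The left side has row $i$ equal to the bottom row of $u_i$, for $i=1,2,3$. Comparing rows $1,2,3$ gives $u_i=0$. The identical argument with the $s_i$ gives $v=0$; this is the same calculation as in the lemma for $\ang{2,2,m}$.

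With $u=v=0$, all six quantities $t_1,t_2,t_3,s_1,s_2,s_3$ lie in $J$, and it remains to derive $\ell(x),\ell(y)\in J$. Applying $W$ directly to $t_i\in J$ is useless, because $e_{j,0}U=0$. Instead I would take combinations $D_at_i\pm D_bt_j$, and possibly $D_at_i\pm D_bs_j$, chosen so that all cross terms cancel by commutativity except a multiple of $D_k\,\ell(x)$ or $D_k\,\ell(y)$. For example, $t_1,t_2,t_3$ look like the components of the cross product $(D_1,D_2,D_3)\times(\B(x_1+y_2),\ \B(x_2),\ -\B(y_1))$. This would give $D_k\ell(x)\in\sum_jD_jJ$ or a similar membership. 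The next step would be to use a fourth row index (this is where $n\ge 4$ should enter) or to multiply by $D(e_{0,1})$ and compare against $g$, so that the $D_k$ can be removed and $W$ applied to reach $\ell(x)\in J$.

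Finding the exact cancelling combination in this last stage is the step I expect to be the main obstacle. I would first try the cross-product structure: if $t=D\times a$ with $D=(D_1,D_2,D_3)$, then dotting with $D$ gives $0$, which was used above, and crossing with $D$ recovers $a$ up to $D\cdot a$ terms. Those $D\cdot a$ terms should be exactly the $\ell$-type expressions needed.
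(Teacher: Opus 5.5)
Your first stage is correct and follows the paper's argument: $\sum_i D_it_i=\sum_i D_is_i=0$, and expanding $g$, cancelling the invertible $D_0$ and applying $W$ gives $u=v=0$.

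The gap is the second stage. You leave it open, and the route you sketch would not close it. The variables $x_2,y_2$ occur only in the $t_i$, while $x_3,y_3$ occur only in the $s_j$. So any combination built from the $t_i$ alone, cross-product or otherwise, is independent of $x_3,y_3$ and cannot produce $\ell(x)$ or $\ell(y)$; the two families have to be mixed. Your plan of first reaching $D_k\ell(x)\in\sum_jD_jJ$ and then removing $D_k$ is also not available. Only $D_0=D(e_{0,0})$ has been normalized to be invertible, and $D_k$ for $k\ge 1$ may have zero diagonal entries. (The hypothesis $n\ge 4$ is used only so that rows $1,2,3$ exist, which your first stage already needs; no further row index enters.)

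The missing observation is that plain, unweighted sums suffice:
\[t_1+s_2=f_1(x_1)+f_2(x_2)+f_3(x_3)=\ell(x),\qquad t_3+s_1=f_1(y_1)+f_2(y_2)+f_3(y_3)=\ell(y).\]
The $f_3(y_1)$ terms cancel in the first sum and the $f_2(x_1)$ terms cancel in the second. Once $u=v=0$, the condition $\Phi=0$ says exactly that every $t_i$ and $s_j$ lies in $J$. Hence $\ell(x),\ell(y)\in J$, i.e.\ $x,y\in L$. Since also $x,y\in\overline{L}$ and $L\cap\overline{L}=\{0\}$, you get $x=y=0$. No $D$-multipliers and no application of $W$ are needed in this step.
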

\begin{proof}
Suppose $\Phi=0$: then
\[
D(e_{0,1})\ell(u)+\sum_{i\ge 1} D(e_{i,0})(g(u_i)+t_i)
\in D(e_{0,1})J+\sum_{i\ge 1} D(e_{i,0})J.
\]

The left hand side equals $D(e_{0,0})\sum_{i\ge 1} D(e_{i,0})\B(e_{0,1}u_i)$, i.e. the $t_i$ terms cancel out.
Factoring out $D(e_{0,0})$ and multiplying by $W$ on both sides shows that $u=0$.
Using similar analysis on $D(e_{0,1})\ell(v)+\sum_{i\ge 1} D(e_{i,0})(g(v_i)+s_i)$ shows that $v=0$.

Finally, $t_1+s_2=\ell(x)$ and $t_3+s_1=\ell(y)$, so $x,y\in L$; but we restricted $x$ and $y$ to be in $\overline{L}$, so $x,y=0$.
\end{proof}

Since $\overline{L}\cong \ell(U^3)/J$, comparing dimensions yields $6m\le 6(r-(n+1)m)\Rightarrow r\ge (n+2)m$.

\begin{remark}
One can instead allow all $x,y\in U^3$; then $\Phi=0$ implies $x,y\in L$, so $\dim{\ker{\Phi}}=\dim{L}$ and $12m-2\dim{L}\le 2\dim{\overline{L}}+6(r-(n+1)m)$.
Since $\dim{L}+\dim{\overline{L}}=3m$, this results in the same lower bound.
\end{remark}

\COMMENT{
n 2 m:

x, y in subspace of U^3

t1=f2(x2)-f3(y1)
t2=-f1(x2)-f3(x1+y2)
t3=f1(y1)+f2(x1+y2)
s1=-f2(x1)+f3(y3)
s2=f1(x1)+f3(x3+y1)
s3=-f1(y3)-f2(x3+y1)

expand:
12x2-13y1
-21x2-23x1-23y2
+31y1+32x1+32y2=0

-12x1+13y3
+21x1+23x3+23y1
-31y3-32x3-32y1=0

t1+s2=f1x1+f2x3+f3x3

t3+s1=f1y1+f2y2+f3y3
}

\subsection{$\ang{3,2,m}$}
Define
\begin{itemize}
    \item $\ell(x_1,x_2)=f_1(x_1)+f_2(x_2)$
    \item $K_i=\ker{u\mapsto f_i(u)+J;\ u\in U}$
    \item $N=K_1\cap K_2$
    \item $\overline{N}\subseteq U$ chosen such that $U=N\oplus \overline{N}$
\end{itemize}

\begin{lemma}
The map $\Phi:U^2\oplus \overline{N}\to (\ell(U^2)/J)\oplus (\ker{W}/J)^2$ defined as
\[\Phi(u,c)=(
\ell(u),\
-g(u_1)+f_2(c),\
-g(u_2)-f_1(c)
)
+_\elt J\]
is injective.
\end{lemma}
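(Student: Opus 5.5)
We suppose $\Phi(u,c)=0$ and show that $u=0$ and $c=0$, mirroring the proof for the $n\ge 4$ case. Here $n=3$, so the indices are $i\in\{1,2\}$. The hypothesis gives $\ell(u)\in J$, $g(u_1)-f_2(c)\in J$ and $g(u_2)+f_1(c)\in J$. Form the combination
\[
D(e_{0,1})\ell(u)+D(e_{1,0})\bigl(g(u_1)-f_2(c)\bigr)+D(e_{2,0})\bigl(g(u_2)+f_1(c)\bigr),
\]
which lies in $D(e_{0,1})J+D(e_{1,0})J+D(e_{2,0})J$.

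\textbf{Step 1: the terms in $c$ cancel.} They are $-D(e_{1,0})D(e_{2,0})\B(c)+D(e_{2,0})D(e_{1,0})\B(c)=0$, since diagonal matrices commute.

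\textbf{Step 2: the remaining terms in $u$.} As in the proof of $\Phi$'s injectivity in Section 2, the terms $D(e_{0,1})D(e_{i,0})\B(u_i)$ cancel against the first part of $D(e_{i,0})g(u_i)$. What remains is $-D(e_{0,0})\sum_i D(e_{i,0})\B(e_{0,1}u_i)$. The right-hand side consists of vectors $D(e_{0,1})D(e_{0,0})\B(a)+\sum_i D(e_{i,0})D(e_{0,0})\B(b_i)$ with $a,b_i\in U$. Factor out the invertible $D(e_{0,0})$ and apply $W$. This uses $WD(A)\B(B)=AB$, which holds for the arguments here because $D(A)D(e_{0,0})^{-1}\cdots$ is never needed: after factoring, each term is of the form $D(A)\B(B)$. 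We obtain
\[
\sum_i e_{i,0}e_{0,1}u_i=\sum_i e_{i,1}u_i\in e_{0,1}U+\sum_i e_{i,0}U=e_{0,1}U.
\]
Here $e_{i,0}U=0$ because the top row of $u\in U$ is zero. The left side is supported on rows $1,2$ and the right side on row $0$, with the $e_{i,1}u_i$ in distinct rows. Hence $u_1=u_2=0$.

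\textbf{Step 3: conclude $c=0$.} With $u=0$, the last two components give $f_2(c)\in J$ and $f_1(c)\in J$. Thus $c\in K_1\cap K_2=N$. Since $c\in\overline N$ and $N\cap\overline N=\{0\}$, we get $c=0$.

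\textbf{Main obstacle.} The only delicate point is the bookkeeping in Step 1: the $c$-terms must be paired with the correct $D(e_{i,0})$ so that they cancel. The signs in the statement ($+f_2(c)$ in the first $g$-slot and $-f_1(c)$ in the second) are chosen exactly for this. Everything else is a direct transcription of the earlier lemma.
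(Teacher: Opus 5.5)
\textbf{Gap in Step 2: the sign on $g$ is wrong, so the claimed cancellation fails.} Your plan is the paper's: weight the three memberships by $D(e_{0,1}),D(e_{1,0}),D(e_{2,0})$, cancel the $c$-terms by commutativity, factor out $D(e_{0,0})$, apply $W$, compare rows, then use $N\cap\overline N=\{0\}$. But with your signs the cross terms add instead of cancelling. Since $g(x)=D(e_{0,1})\B(x)-D(e_{0,0})\B(e_{0,1}x)$, your combination gives
\[
D(e_{0,1})\ell(u)+\sum_{i=1}^{2}D(e_{i,0})g(u_i)=2\sum_{i=1}^{2}D(e_{0,1})D(e_{i,0})\B(u_i)-D(e_{0,0})\sum_{i=1}^{2}D(e_{i,0})\B(e_{0,1}u_i).
\]
The terms $D(e_{0,1})D(e_{i,0})\B(u_i)$ double. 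In Section 2 they cancel only because $g$ enters with a minus sign, as in $D(e_{0,1})f_1(u)-D(e_{1,0})g(u)$. So your formula for ``what remains'' does not follow from the combination you wrote.

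\textbf{The fix is immediate.} Subtract the $D(e_{i,0})$-terms instead, i.e.\ use
\[
D(e_{0,1})\ell(u)-D(e_{1,0})\bigl(g(u_1)-f_2(c)\bigr)-D(e_{2,0})\bigl(g(u_2)+f_1(c)\bigr),
\]
which is exactly the paper's combination. Both $c$-terms flip sign together, so your Step 1 cancellation survives. The expression now reduces to $+D(e_{0,0})\sum_i D(e_{i,0})\B(e_{0,1}u_i)$. After that, your row comparison and Step 3 go through unchanged.

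Alternatively, keep your combination and subtract $2D(e_{0,1})\ell(u)$. This stays inside $D(e_{0,1})J+D(e_{1,0})J+D(e_{2,0})J$ because $\ell(u)\in J$. It yields exactly the expression $-D(e_{0,0})\sum_i D(e_{i,0})\B(e_{0,1}u_i)$ you claimed.

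So the bookkeeping you flagged as delicate in Step 1 is fine; the actual slip is in Step 2.
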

\begin{proof}
Suppose $\Phi=0$: then
\[D(e_{0,1})\ell(u)
+D(e_{1,0})(-g(u_1)+f_2(c))+D(e_{2,0})(-g(u_2)-f_1(c))
\in D(e_{0,1})J+D(e_{1,0})J+D(e_{2,0})J.\]

The left hand side equals $D(e_{0,0})\paren{D(e_{1,0})\B(e_{0,1}u_1)+D(e_{2,0})\B(e_{0,1}u_2)}$, i.e. the $c$ terms cancel out.
Factoring out $D(e_{0,0})$ and multiplying by $W$ on both sides shows that $u=0$.
Then $\Phi=(0,f_2(c),-f_1(c))$, so $c\in N$; since $c$ is restricted to be in $\overline{N}$, $c=0$.
\end{proof}

\subsubsection{Proving $R(\ang{3,2,m})\ge \frac{24m}{5}$}
Since $\overline{N}\cong \brace{(f_1(u),f_2(u))+_\elt J: u\in U}$,
$2\dim{\overline{N}}\ge \dim{\ell(U^2)/J}$.

Let $\Delta=2\dim{\overline{N}}-\dim{\ell(U^2)/J}$.
Comparing dimensions on $\Phi$ yields
\[2m+\dim{\overline{N}}\le \dim{\ell(U^2)/J}+2(r-4m)\]
\[4m+2\dim{\overline{N}}\le 2\dim{\ell(U^2)/J}+4(r-4m)\]
\[4m+\Delta\le \dim{\ell(U^2)/J}+4(r-4m)\le 5(r-4m)\]
\[r\ge\frac{24}{5}m.\]

\subsubsection{Strong normalization}
To prove a tighter lower bound on $\Delta$, we invoke a stronger normalization on decompositions for the general tensor $\ang{k,l,m}$:
\begin{lemma}
For a decomposition $(W,D,\B)$ of $\ang{k,l,m}$, we can change basis such that
\[\forall v\in \F^{l\times m},\ v\ne 0:\ \brace{f_i(v):0\le i<k} \textrm{ is linearly independent}\]
and $D(e_{0,0})$ is invertible.
\end{lemma}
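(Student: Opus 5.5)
The plan is to reduce the statement to a genericity claim over the infinite field $\F$ (assumed WLOG, as in the previous section). Write $A(a):=\sum_i a_ie_{i,0}$. Then $\sum_i a_if_i(v)=D(A(a))\B(v)$, so the required independence says: for all nonzero $a\in\F^k$ and nonzero $v\in\F^{l\times m}$ we have $D(A(a))\B(v)\neq 0$. Equivalently, there is some $r$ with $\alpha_r(A(a))\ne0$ and $\beta_r(v)\ne0$.

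After a change of basis $(\alpha_r(S^{-1}\cdot T),\beta_r(T^{-1}\cdot),S^{-1}C_r)$, the first column of $A(a)$ in the original coordinates becomes $S^{-1}a\,t^{\mathsf T}$, where $t^{\mathsf T}$ is the first row of $T$. Since $S$ only reparametrizes $a$, the condition becomes a condition on a single vector $t\in\F^l$:
\[(\ast)\qquad \forall a\ne0,\ v\ne0\ \ \exists r:\ \alpha_r(at^{\mathsf T})\ne0,\ \beta_r(v)\ne0.\]
The invertibility of $D(e_{0,0})$ means $\alpha_r(s\,t^{\mathsf T})\ne0$ for all $r$, where $s$ is the first column of $S$. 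This holds for $(s,t)$ outside a proper Zariski-closed set, exactly as in the earlier normalization. The first column of $S$ and the first row of $T$ can be extended to invertible matrices. So it suffices to show that the set of $t$ violating $(\ast)$ is contained in a proper closed subset of $\F^l$. Intersecting two nonempty opens is fine because $\F$ is infinite.

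To analyze $(\ast)$ I would use the decomposition identity. Suppose $(a,v)$ is a counterexample for $t$, and let $S_v=\{r:\beta_r(v)\ne0\}$, which is nonempty since $\B$ is injective. Then $\alpha_r(at^{\mathsf T})=0$ for all $r\in S_v$. Applying $W$ gives $a\,t^{\mathsf T}v=0$, hence $t^{\mathsf T}v=0$. Moreover, for any $A'$ we have $A'v=\sum_{r\in S_v}\alpha_r(A')\beta_r(v)C_r$. The linear forms $t\mapsto\alpha_r(at^{\mathsf T})$, $r\in S_v$, must all vanish at $t$. I want to show that, as $(a,v)$ range, the union of their common zero sets $Z_{a,v}$ is a proper subvariety of $\F^l$. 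I would do this with an incidence-variety dimension count on $\{(t,[a],[v])\}$, showing its projection to $t$ is not dominant. The key input is that the forms $\{\alpha_r(a\,\cdot^{\mathsf T})\}_{r\in S_v}$ cannot all vanish on a large set of $t$. If they did, then $A'v$ would be expressible using only $\alpha_r$ that vanish on $\{at^{\mathsf T}: t\in Z_{a,v}\}$. That would force $a\,t^{\mathsf T}v=0$ for all such $t$, so the forms together with the rank structure of $v$ constrain $Z_{a,v}$ to have codimension large enough to beat the $(k-1)+(lm-1)$ parameters of $([a],[v])$.

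The main obstacle is this dimension count: a single pair $(a,v)$ only obviously gives $\dim Z_{a,v}\le l-\mathrm{rank}(v^{\mathsf T})$, which is too weak against the many parameters. If the direct count fails, I would first try to exploit the remaining freedom in $T$ (acting on $v$ via $T^{-1}$) together with $S$. The point is to show that a counterexample for generic $(S,T)$ would yield, by specialization, a nonzero $v$ with $\B(v)$ supported on indices where every $\alpha_r$ vanishes on a whole column-type subspace. Multiplying by $W$ would then give $Av=0$ for all $A$, contradicting injectivity of $\B$ as in the proof that $\B$ is injective.
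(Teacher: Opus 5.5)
There is a genuine gap. Your reduction to a condition on the single vector $t$ is correct, and so is your derivation that a counterexample $(a,v)$ forces $t^{T}v=0$; this is the paper's key observation. But the step you flag as the main obstacle is the whole content of the lemma, and you never carry it out. The incidence-variety count cannot work with the estimate you have: $Z_{a,v}\subseteq\{t: t^{T}v=0\}$ gives codimension only $\ge 1$ per pair, against $(k-1)+(lm-1)$ parameters. The fallback via ``specialization'' is not an argument as written. To pass from ``a generic $t$ has some counterexample $(a,v)$, with $v$ depending on $t$'' to a single nonzero $v$ that is bad for a dense set of $t$, you need a finiteness statement you never establish. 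Only with such a $v$ would you get $t^{T}v=0$ for all $t$, hence $v=0$.

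The missing idea is that whether $(a,v)$ is a counterexample for $t$ depends on $v$ only through the support $S_v=\{r:\beta_r(v)\ne 0\}$. The condition is $\alpha_r(at^{T})=0$ for all $r\in S_v$, and there are at most $2^R$ possible supports.

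For each support pattern $\Sigma$ realized by some nonzero $v$, fix one representative $v_\Sigma$. Suppose $t$ is bad with witness $(a,v)$. Then $(a,v_{S_v})$ is also a witness, and your own computation (apply $W$ to $D(at^{T})\B(v_{S_v})=0$) gives $t^{T}v_{S_v}=0$. This holds for every $a\ne 0$, so the union over $a$ needs no separate treatment. Hence the bad set lies in the finite union of subspaces $\{t: t^{T}v_\Sigma=0\}$. Each is proper because $v_\Sigma$ has a nonzero column. Over an infinite field this union is contained in a proper closed set, and your reduction finishes the proof.

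This is exactly the paper's argument. It phrases the per-pattern condition as the nonvanishing of some $k\times k$ minor of the matrix of $a\mapsto(\alpha_r(ap))_{r\in S}$. It notes that any $p$ with $pv\ne 0$ makes that minor nonzero, and that only finitely many such polynomials occur. So your per-pair bound was not too weak: once the union over $v$ collapses to a finite one, it is all you need, and no dimension count is required.
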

\begin{proof}
We show that a generically chosen $p\in\F^{1\times l}$ satisfies
\[\forall v\in \F^{l\times m},\ v\ne 0:\ (a\mapsto D(ap)\B(v)) \textrm{ is injective},\]

which implies $\brace{D(e_i p)\B(v):0\le i<k}$ is linearly independent for all nonzero $v$.
Afterwards, we can apply a change of basis that maps $e_i p\mapsto e_{i,0}$.

For any fixed $v$, the map $(a\mapsto D(ap)\B(v))$ is injective for any $p$ such that $pv\ne 0$, as having $D(ap)\B(v)=0$ would imply $WD(ap)\B(v)=a(pv)=0 \Rightarrow a=0$.

Let $\A(a)=\diag{D(a)}$ and $S=\brace{i:\B(v)_i\ne 0}$: then having $(a\mapsto D(ap)\B(v))$ be injective is equivalent to $(a\mapsto \A(ap)_S)$ being injective.
Since there exists $p$ making the latter injective, $|S|\ge k$ and\footnote{Another proof that $|S|\ge k$ is that for any nonzero $v\in \F^{l\times m}$, the product $uv=WD(u)\B(v)$ spans dimension $\ge n$ as we vary $u\in\F^{k\times l}$, with equality if $v$ has matrix rank 1.}
the matrix corresponding to $(a\mapsto \A(ap)_S)$ contains at least one $k\times k$ minor that is a non-constant polynomial in $p$.
Since there are finitely many distinct $S$ across all $v$, $p$ only has to force finitely many non-constant polynomials to be simultaneously nonzero, which a generic $p$ will do.

Finally, having $D(e_{0,0})$ be invertible is equivalent to forcing an additional finite set of polynomials to be nonzero, which a generic $p$ will also satisfy.
\end{proof}

\begin{corollary}
For any $\gamma\in\F^k,\ \gamma\ne 0$, the map $v\mapsto \sum_i \gamma_i f_i(v)$ is injective.
\end{corollary}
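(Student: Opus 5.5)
The plan is to reduce the Corollary to the linear-independence statement of the strong normalization lemma, using the fact that $f_i$ is linear in the $D$ argument. Specifically, $D(\cdot)$ is linear, so
\[\sum_i \gamma_i f_i(v)=\sum_i \gamma_i D(e_{i,0})\B(v)=D\paren{\textstyle\sum_i \gamma_i e_{i,0}}\B(v).\]
Since $v\mapsto \sum_i\gamma_i f_i(v)$ is linear, injectivity is equivalent to a trivial kernel. So I will take $v$ with $\sum_i\gamma_i f_i(v)=0$ and show that $v=0$.

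There are two routes, and I would present the first one.

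The first route applies the strong normalization lemma directly. Suppose $v\ne 0$. By the lemma, after the change of basis the vectors $\{f_i(v):0\le i<k\}$ are linearly independent. Then $\sum_i\gamma_i f_i(v)=0$ forces $\gamma=0$, which contradicts $\gamma\ne 0$. Hence $v=0$.

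The second route is a sanity check and does not depend on the statement of the lemma, only on what its proof establishes. The proof shows that $a\mapsto D(ap)\B(v)$ is injective for every nonzero $v$, and the basis change sends $e_ip\mapsto e_{i,0}$. Taking $a=\gamma^{T}\in\F^{1\times k}$ as a column vector gives $\sum_i\gamma_i e_{i,0}=ap$ in the new basis. So $D(ap)\B(v)=0$ with $a\ne 0$ forces $v=0$.

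The only point that needs care is that the change of basis from the normalization lemma is already in effect, so that $f_i$ here means $D(e_{i,0})\B(\cdot)$ in the normalized coordinates. Beyond that there is no real obstacle; the Corollary is essentially the lemma rewritten via linearity of $D$.
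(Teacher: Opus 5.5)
Your first route is correct and is exactly the paper's argument. The paper states the corollary without a separate proof, as an immediate consequence of the strong normalization lemma: a nonzero $v$ in the kernel would give a vanishing nontrivial combination of the linearly independent vectors $f_0(v),\dots,f_{k-1}(v)$.

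In your optional second route, $a$ should be $\gamma$ itself viewed as a column in $\F^{k\times 1}$ (so that $ap=\sum_i\gamma_i e_i p$), not $\gamma^{T}\in\F^{1\times k}$. You should also note that the basis change transforms $\B(v)$ by an invertible map, though neither point affects your main argument.
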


\begin{lemma}
\label{subspace-expand}
For any nonzero subspace $S\subseteq\F^{l\times m}$, $\dim{f_0(S)+\dots+f_{k-1}(S)}\ge \dim{S}+(k-1)$.
\end{lemma}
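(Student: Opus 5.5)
The plan is to reduce the statement to a nondegenerate bilinear map and count dimensions over an algebraically closed field. We may take $\F$ algebraically closed: the paper already replaces $\F$ by its algebraic closure at the start, and the strong normalization lemma is applied over that field.

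Put $d=\dim{S}$ and $T=f_0(S)+\dots+f_{k-1}(S)$. Define the bilinear map
\[\beta: S\times \F^k\to T,\qquad \beta(v,\gamma)=\sum_i \gamma_i f_i(v).\]
The normalization lemma, and equivalently the corollary after it, says $\beta(v,\gamma)\ne 0$ whenever $v\ne 0$ and $\gamma\ne 0$. So $\beta$ is a \emph{nondegenerate} bilinear map, and its image spans $T$. It remains to show that such a map needs target dimension at least $d+k-1$. This is the classical Hopf-type bound, which holds over algebraically closed fields.

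To prove this bound, suppose for contradiction that $t:=\dim{T}\le d+k-2$. Fix coordinates $\lambda_1,\dots,\lambda_t$ on $T$. Each $\lambda_j\circ\beta$ is a bihomogeneous form of bidegree $(1,1)$ on $\mathbb{P}(S)\times\mathbb{P}(\F^k)\cong \mathbb{P}^{d-1}\times\mathbb{P}^{k-1}$, which is an irreducible projective variety of dimension $d+k-2$. The divisor class $\mathcal{O}(1,1)$ is ample, so the common zero locus of $t\le d+k-2$ such sections is nonempty: each successive hypersurface section of a positive-dimensional projective variety by an ample divisor is nonempty and drops dimension by at most one. A common zero $([v],[\gamma])$ gives $\beta(v,\gamma)=0$ with $v,\gamma\ne 0$, contradicting nondegeneracy. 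Hence $\dim{T}\ge d+k-1$.

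The main obstacle is making the intersection step rigorous and self-contained rather than citing ampleness. If an elementary route is wanted, I would first try induction on $k$. For $k=2$, suppose $\dim{T}=d$. Then $f_0(S)=f_1(S)$, since $f_0$ is injective, so $\varphi=f_0^{-1}f_1$ is an endomorphism of $S$. Over an algebraically closed field $\varphi$ has an eigenvector $v$ with $f_1(v)=\lambda f_0(v)$, contradicting independence. The general case would replace this eigenvector argument with a pencil or resultant argument on $\beta$, and that is where the real work lies. The projective dimension count handles all $k$ uniformly, so I would present that version.
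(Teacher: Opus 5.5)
Your proof is correct, but it takes a genuinely different route from the paper.

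The paper never uses nondegeneracy of $\beta$ abstractly. Instead it exploits the special form $f_i=P_if_0$, where $P_i=D(e_{i,0})D(e_{0,0})^{-1}$ is diagonal. All $P_i$ therefore preserve the coordinate flag $0=E_0\subset E_1\subset\dots\subset E_r$. For the least $j$ with $f_0(S)\cap E_j\neq 0$, this intersection is a line $\mathrm{span}(f_0(v))$. The $k$ independent vectors $f_i(v)=P_if_0(v)$ all lie in $E_j$, so their span meets $f_0(S)$ only in that line, and the bound $d+k-1$ follows at once. This is elementary and valid over any field on which the normalization holds.

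In effect, the flag supplies, without algebraic closure, the invariant line that your $k=2$ eigenvector argument needed: if $f_1(S)\subseteq f_0(S)$, the first nonzero $f_0(S)\cap E_j$ is a $P_1$-invariant line.

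Your argument discards the diagonal structure and uses only that $\beta$ is nondegenerate. It then invokes the classical bound for nondegenerate bilinear maps via the dimension count on $\mathbb{P}^{d-1}\times\mathbb{P}^{k-1}$. A cleaner citation for the intersection step is the Segre form: the Segre variety has dimension $d+k-2$, so it meets every linear subspace of codimension at most $d+k-2$. This buys generality, since it covers every nondegenerate bilinear map. The cost is heavier machinery and an essential reliance on algebraic closure. For example, complex multiplication $\mathbb{R}^2\times\mathbb{R}^2\to\mathbb{R}^2$ is nondegenerate with target dimension $2<3$.

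One correction: the paper does not pass to $\overline{\F}$. It only assumes $\F$ infinite, passing to the closure only when $\F$ is finite. Your reduction is still legitimate, because a decomposition over $\F$ is also one over $\overline{\F}$. So you can extend scalars at the outset and perform the normalization there.

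Alternatively, nondegeneracy over an infinite $\F$ persists over $\overline{\F}$ in this setting. Each coordinate $\beta(v,\gamma)_s$ is a product of a linear form in $\gamma$ and one in $v$, and every support pattern of $\mathcal{B}(v)$ over $\overline{\F}$ already occurs over $\F$. This persistence fails for general bilinear maps; the complex-multiplication example becomes degenerate over $\mathbb{C}$.
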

\begin{proof}
Define $P_i=D(e_{i,0})D(e_{0,0})^{-1}$:
then $f_i(v)=P_i f_0(v)$.
Because all $P_i$ are diagonal, they simultaneously preserve the subspaces $0=E_0\subset \dots \subset E_r=\F^r$, where $E_i$ consists of all vectors whose elements after the first $i$ many are 0.

Choose the smallest $j$ such that the intersection $X:=f_0(S)\cap E_j$ is nonzero; then $\dim{X}=1$, so $X=\span{\brace{f_0(v)}}$ for some nonzero $v$. Then $T:=\span{\brace{f_i(v):0\le i<k}}$ has dimension $k$.
Furthermore, $T\subseteq E_j$, due to invariance of $E_j$ under $P_i$, forcing $f_0(S)\cap T=\span{\brace{f_0(v)}}$.
Thus, $\dim{f_0(S)+T}\ge \dim{f_0(S)}+k-1$.
\end{proof}

\subsubsection{Proving $R(\ang{3,2,m})\ge \frac{24m+2}{5}$}
Define $S=\brace{x\in N: f_1(x), f_2(x)\in f_0(N)}$.

\begin{lemma}
\label{32m-helper-subspace}
$\dim{S}\ge \dim{N}-\Delta$.
\end{lemma}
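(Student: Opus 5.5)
The plan is to realize $S$ as the kernel of a linear map from $N$ into a space of dimension exactly $\Delta$. To find that space, reinterpret $\Delta$ through the kernel of $\psi:U^2\to\ell(U^2)/J$, $\psi(x_1,x_2)=\ell(x_1,x_2)+J$, which is surjective by definition.

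\textbf{Step 1: $\Delta$ is a codimension.} Let $L=\ker{\psi}\subseteq U^2$. Then $\dim{L}=2m-\dim{\ell(U^2)/J}$. Since $\dim{\overline{N}}=m-\dim{N}$, the definition of $\Delta$ gives $\dim{\ell(U^2)/J}=2m-2\dim{N}-\Delta$, so $\dim{L}=2\dim{N}+\Delta$. Moreover $N^2\subseteq L$: if $x_1,x_2\in N=K_1\cap K_2$, then $f_1(x_1),f_2(x_2)\in J$, so $\ell(x_1,x_2)\in J$. Hence $\dim{L/N^2}=\Delta$.

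\textbf{Step 2: linear maps $A,B:N\to U$.} For $x\in N$ we have $f_1(x),f_2(x)\in J=f_0(U)$. Since $f_0$ is injective, there are unique $Ax,Bx\in U$ with $f_0(Ax)=f_1(x)$ and $f_0(Bx)=f_2(x)$, and $A,B$ are linear. By definition, $x\in S$ if and only if $Ax,Bx\in N$.

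\textbf{Step 3: $(-Bx,Ax)\in L$.} This is the key step. Write $P_i=D(e_{i,0})D(e_{0,0})^{-1}$ as in the proof of Lemma~\ref{subspace-expand}, so that $f_i=P_if_0$, and the $P_i$ are diagonal and hence commute. Then
\[f_2(Ax)=P_2f_0(Ax)=P_2P_1f_0(x)=P_1P_2f_0(x)=P_1f_0(Bx)=f_1(Bx).\]
Therefore $\ell(-Bx,Ax)=-f_1(Bx)+f_2(Ax)=0\in J$, so $(-Bx,Ax)\in L$.

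\textbf{Step 4: conclusion.} Consider the linear map
\[\theta:N\to L/N^2,\qquad \theta(x)=(-Bx,Ax)+N^2.\]
Its kernel consists of those $x$ with $Ax,Bx\in N$, which is exactly $S$ by Step 2. By rank–nullity and Step 1,
\[\dim{S}\ge\dim{N}-\dim{L/N^2}=\dim{N}-\Delta.\]

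The only non-bookkeeping point is Step 3: the commutation of the diagonal $P_i$ is what makes the pair $(-Bx,Ax)$ land in $L$. Everything else is dimension counting from the definition of $\Delta$.
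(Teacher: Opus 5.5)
Your proof is correct and follows the paper's argument. The paper likewise writes $f_1(x)=f_0(x_1)$ and $f_2(x)=f_0(x_2)$, uses commutativity of the diagonal $D(\cdot)$ to get $f_2(x_1)=f_1(x_2)$ so that $(x_2,-x_1)$ lies in the kernel of $(x,y)\mapsto\ell(x,y)+J$, and counts at most $\dim{L}-\dim{N^2}=\Delta$ extra linear conditions. Your map $\theta:N\to L/N^2$ makes that count rigorous, and you are slightly more careful than the paper in noting that $Ax,Bx$ lie in $U$ (not just $\F^{2\times m}$), which the dimension count over $U^2$ requires.
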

\begin{proof}
Since $f_0$ is injective, there exist fixed linear maps $h_1, h_2: N\to \F^{2\times m}$ such that for any $x\in N$, the elements $x_1:=h_1(x),\ x_2:=h_2(x)$ satisfy
\[f_1(x)=f_0(x_1)\]
\[f_2(x)=f_0(x_2);\]

Furthermore, such $x$ is in $S$ if and only if $x_1,x_2\in N$.

Using commutativity of $D(\cdot)$, we have
\[f_2(x_1)=D(e_{2,0})D(e_{0,0})^{-1}f_0(x_1)
=D(e_{2,0})D(e_{0,0})^{-1}f_1(x)\]
\[=D(e_{1,0})D(e_{0,0})^{-1}f_2(x)
=D(e_{1,0})D(e_{0,0})^{-1}f_0(x_2)
=f_1(x_2),\]

so $F_2(x_1)=F_1(x_2)$.

Define $L(x,y)=f_1(x)+f_2(y)+J=\ell(x,y)+J$: then $(x_2,-x_1)\in \ker{L}$.

For $x$ to be in $S$, the tuple $(x_2,-x_1)$ would have to live in $N^2$, which is a subset of $\ker{L}$. Since $(x_2,-x_1)$ already lives in $\ker{L}$,
$x$ needs at most $\dim{\ker{L}}-\dim{N^2}$ many additional linear restrictions.
Since this quantity equals
\[\paren{\dim{U^2}-\dim{L(U^2)}}-\paren{\dim{U^2}-\dim{\overline{N}^2}}
=2\dim{\overline{N}}-\dim{\ell(U^2)/J}
=\Delta,\]
we have $\dim{S}\ge \dim{N}-\Delta$.
\end{proof}

If $\dim{S}=0$, $\Delta\ge \dim{N}$.
Otherwise, since $f_0(S)+f_1(S)+f_2(S)\subseteq f_0(N)$, applying Lemma \ref{subspace-expand} gives $\dim{S}+2\le \dim{N}$, thus $\dim{N}-\Delta+2\le \dim{N} \Rightarrow \Delta\ge 2$. Combining these cases, $\Delta\ge \min(2,\dim{N})$.

Recall the inequality from the previous lower bound:
\begin{equation}
\label{bound-32m-b}
4m+\Delta\le 5(r-4m).
\end{equation}

There is an alternative way to loosen the original dimension comparison on $\Phi$, by substituting $\dim{\overline{N}}=m-\dim{N}$:
\begin{equation}
\label{bound-32m-a}
3m-\dim{N}\le 3(r-4m).
\end{equation}

If $\dim{N}\le 2$, (\ref{bound-32m-a}) yields $r\ge 4m+\ceil{m-\frac{\dim{N}}{3}}=5m$;
otherwise, (\ref{bound-32m-b}) yields $r\ge \frac{24m+2}{5}$.


\begin{thebibliography}{}
\bibitem{blaser99}
M. Bl\"aser.
Lower bounds for the multiplicative complexity of matrix multiplication. computational complexity, 8(3):203-26,
1999.


\bibitem{survey}
M. Bl\"aser.
Fast matrix multiplication.
Theory of Computing Library, 5:1-60,
2013.

\bibitem{proof323}
V. P. Burichenko,
On bilinear complexity of multiplication of a $3\times 2$ matrix by a $2\times 3$ matrix.
Diskretnaya Matematika, 36(1):15–45,
2024.

\bibitem{hopcroft}
J. E. Hopcroft, L. R. Kerr.
On minimizing the number of multiplications necessary for matrix multiplication.
SIAM Journal on Applied Mathematics, 20(1):30-6,
1971.

\bibitem{shitov}
Y. Shitov.
Determining the tensor rank of $(2, 2, M)$ matrix multiplication with GPT-5.6 Sol.
2026.
\url{https://doi.org/10.13140/RG.2.2.29432.61441}

\bibitem{tsyganov}
A. Tsyganov, U. Parkina, S. Samsonov, M. Rakhuba.
A lower bound for $\ang{3,2,m}$ matrix multiplication.
2026.
\url{https://arxiv.org/pdf/2609.22054}
\end{thebibliography}
\end{document}